\newtheorem{proposition}{Proposition}
\newtheorem{corollary}{Corollary}
\newtheorem{remark}{Remark}
\newtheorem{definition}{Definition}
\newtheorem{lemma}{Lemma}
\newenvironment{proof}[1][Proof:]{\begin{trivlist}
\item[\hskip \labelsep {\bfseries #1}]}{\end{trivlist}}
\title{Global hyperbolicity and factorization in cosmological models}
\author{Z. Avetisyan\footnote{z.avetisyan@math.ucsb.edu}}
\date{Department of Mathematics, UCSB, Santa Barbara, USA}
\begin{document}

\maketitle

\abstract{The subject of the paper is the geometry and topology of cosmological spacetimes and vector bundles thereon, which are used to model physical fields propagating in the universe. Global hyperbolicity and factorization properties of the spacetime and the vector bundle that are usually independently assumed to hold, are now derived from a minimal set of assumptions based on the recent progress in differential geometry and topology.}

\section*{Introduction}

We deal with global hyperbolicty and factorization of spacetimes and vector bundles that appear in mathematical models of homogeneous cosmology. That a globally hyperbolic spacetime $M$ is topologically factorized as $M\simeq\mathbb{R}\times\Sigma$ has been known since decades (see Chapter 3 in \cite{BeemEhrlichEasley}), but that it can be factorized geometrically as $(M,\mathrm{g})\simeq(\mathbb{R}\times\Sigma,\beta^2\oplus-h_*)$, with $t\mapsto h_t$ being a smooth family of Riemannian metrics, is a relatively new result \cite{BernalSanchez2005}. The obvious question of the converse implication, i.e., which spacetimes $(\mathbb{R}\times\Sigma,\beta^2\oplus-h_*)$ are globally hyperbolic, does not seem to have a complete answer for $\Sigma$ non-compact (cosmologically most relevant). For homogeneous and isotropic (FRW) cosmological spacetimes the answer is known due to the fact that these spacetimes are warped products \cite{BeemEhrlichEasley}. But purely homogeneous cosmological spacetimes are not warped products, and their global hyperbolicity does not seem to have been fully settled in the literature. In the first section of this paper we use the recently developed idea of continuous Riemannian metrics on smooth manifolds to establish convenient necessary and sufficient conditions for a constant time hypersurface $\{t\}\times\Sigma$ to be a Cauchy hypersurface in the spacetime $(\mathbb{R}\times\Sigma,\beta^2\oplus-h_*)$. This immediately provides the global hyperbolicity of homogeneous cosmological spacetimes, as shown in the second section. Note that for homogeneous cosmological models one does not need to indulge into non-smooth metrics, but the latter is useful for other spacetimes with metrically incomplete constant time hypersurfaces, where the standard completeness arguments break down, at least in their old interpretation.

Another factorization $(M,\mathrm{g})\simeq(\mathbb{R}\times\Sigma,\beta^2\oplus-h_*)$ arises from the assumption of spatial homogeneity in cosmology, i.e., that a Lie group acts by isometries transitively on spatial hypersurfaces \cite{StephaniKramerMacCallumHoenselaersHerlt2003}. The natural question is whether these two factorizations coincide, yielding a convenient foliation by homogeneous Cauchy hypersurfaces. To our knowledge, the literature in mathematical cosmology mostly assumes this to be the case in their models, usually in combination with other convenient properties, and it is often difficult to disentangle the logical interdependencies of various assumptions used together. In the second section of this work we set forth a minimal set of assumptions on a homogeneous cosmological spacetime, and show how the rest of the desired properties follow automatically.

Finally, once the spacetime is factorized as $\mathbb{R}\times\Sigma$, one would like to know whether a vector bundle $\mathcal{T}\to\mathbb{R}\times\Sigma$ is also factorized as $\mathcal{T}\simeq\mathbb{R}\times\mathcal{S}$ where $\mathcal{S}\to\Sigma$ is a fixed vector bundle. This is automatic for covariant vector bundles (tensor, spinor) but appears to be unanswered for an abstract vector bundle. In the third section we apply recent results from geometric topology to give an affirmative answer to this question in full generality. This factorization becomes indispensable for separation of variables techniques on PDEs acting on sections of vector bundles.

\section*{Global hyperbolicity}

A standard assumption that ensures the dynamical consistency of a field theory coupled with classical gravity is that the spacetime is globally hyperbolic. A smooth connected $d+1$-dimensional Lorentzian manifold $(M,\mathrm{g})$ is globally hyperbolic if it admits a Cauchy hypersurface. Superficially, this seems to be a very mild condition about the special nature of time in a spacetime, that does not explicitly violate general covariance and should not result in a remarkable loss of generality. However, it turns out that global hyperbolicity implies rather strict geometrical and topological restrictions, essentially reducing the most general case to a classical archetype. In fact, by the famous Theorem 1.1 in \cite{BernalSanchez2005} there exists a smooth temporal function $t\in C^\infty(M,\mathbb{R})$ and a smooth connected $d$-dimensional manifold $\Sigma$ such that $M$ is isometrically diffeomorphic to $\mathbb{R}\times\Sigma$ taken with the Lorentzian metric $\beta^2\oplus-h_*$, where $\beta\in C^\infty(\mathbb{R}\times\Sigma,\mathbb{R}_+)$ is a positive smooth function and $\mathbb{R}\ni t\mapsto h_t\in C^\infty(\mathrm{T}^{*\otimes2}\Sigma)$ is a smooth family of Riemannian metrics on $\Sigma$. Therefore there is no loss of generality in assuming that the Lorentzian manifold $(M,\mathrm{g})$ is $(\mathbb{R}\times\Sigma,\beta^2\oplus-h_*)$. In this section all manifolds are assumed to be without boundary.

Several partial converses of this statement can be found in the literature, perhaps the best known being Theorem 3.68 (with variations) in \cite{BeemEhrlichEasley}, proving the global hyperbolicity of a warped product with a complete Riemannian manifold. A spacetime of the form $(\mathbb{R}\times\Sigma,\beta^2\oplus-h_*)$ need not be a warped product, however, and completeness of Riemannian hypersurfaces is not necessary for global hyperbolicity. Moreover, purely homogeneous cosmological spacetimes are not warped products, and there seems to be no widely known published proof that these spacetimes are globally hyperbolic. Among the results that we know of, the closest to an answer comes \cite{ChoquetBruhatCotsakis2002} where authors consider the particular case which, in the terminology of the present paper, corresponds to a globally bounded function $D$ to be defined below. In this section we present a necessary and sufficient criterion of global hyperbolicity that will allow us to easily establish it for cosmological spacetimes. It also covers situations where the Cauchy surfaces are not necessarily complete, such as a bounded open causal diamond in another globally hyperbolic spacetime, which is always globally hyperbolic.

Let the spacetime $(\mathbb{R}\times\Sigma,\beta^2\oplus-h_*)$ be given, and choose a continuous (not necessarily smooth) Riemannian metric $h_\infty$ on $\Sigma$ such that the Riemannian manifold $(\Sigma,h_\infty)$ is complete as a metric space. There is always a good supply of smooth such metrics on every connected (second countable) manifold by Theorem 1 in \cite{NomizuOzeki1961}, but we will allow for merely continuous metrics. A beautiful recent treatment of continuous Riemannian metrics can be found in \cite{Burtscher2015}. The following is a minimal adaptation of Lemma 3.65 in \cite{BeemEhrlichEasley} to continuous Riemannian metrics.

\begin{lemma}\label{FinLLemma} Let $h_\infty$ be a continuous Riemannian metric on the connected manifold $\Sigma$ such that $(\Sigma,h_\infty)$ is complete. Let further $\gamma:[0,1)\to\Sigma$ be an absolutely continuous curve of finite length. Then
$$
\exists\lim\limits_{t\to1-}\gamma(t).
$$
\end{lemma}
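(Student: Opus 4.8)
The plan is to show that $\gamma(t)$ is Cauchy as $t\to1-$ in the metric space $(\Sigma,d_{h_\infty})$, where $d_{h_\infty}$ denotes the length distance induced by the continuous metric $h_\infty$, and then invoke completeness. The first step is to pin down the relevant notion of length for an absolutely continuous curve: absolute continuity read in any (hence every) chart guarantees that $\dot\gamma(u)$ exists for almost every $u$, and since $h_\infty$ is continuous the function $u\mapsto\|\dot\gamma(u)\|_{h_\infty}:=\sqrt{(h_\infty)_{\gamma(u)}(\dot\gamma(u),\dot\gamma(u))}$ is measurable; the length of $\gamma$ is $L(\gamma)=\int_0^1\|\dot\gamma(u)\|_{h_\infty}\,du$, and this agrees with the variational (supremum over partitions) length measured with $d_{h_\infty}$. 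For continuous Riemannian metrics these facts, together with the fact that $d_{h_\infty}$ is a genuine distance inducing the manifold topology, are available in \cite{Burtscher2015}, and I would simply cite them rather than reprove them.

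Granting that, define the arclength function $\ell(t):=\int_0^t\|\dot\gamma(u)\|_{h_\infty}\,du$ on $[0,1)$. It is nondecreasing and, by the finite-length hypothesis, bounded above by $L(\gamma)<\infty$; hence $\lim_{t\to1-}\ell(t)$ exists and is finite. Since $\|\dot\gamma\|_{h_\infty}\in L^1([0,1))$, for every $\varepsilon>0$ there is $\delta>0$ with $\int_s^t\|\dot\gamma(u)\|_{h_\infty}\,du<\varepsilon$ whenever $1-\delta<s\le t<1$ (absolute continuity of the integral of an $L^1$ function over a shrinking tail). The basic estimate $d_{h_\infty}(\gamma(s),\gamma(t))\le L(\gamma|_{[s,t]})=\int_s^t\|\dot\gamma(u)\|_{h_\infty}\,du$ then shows that $\gamma(s)$ and $\gamma(t)$ lie within $\varepsilon$ of each other once $s,t$ are close enough to $1$.

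Consequently, for any sequence $t_n\uparrow1$ the sequence $(\gamma(t_n))_n$ is Cauchy in $(\Sigma,d_{h_\infty})$, so by completeness it converges to some $p\in\Sigma$; a standard interleaving argument shows $p$ is independent of the chosen sequence, i.e.\ $\lim_{t\to1-}\gamma(t)=p$ in the metric, hence in the manifold topology since $d_{h_\infty}$ induces it. The only genuinely nontrivial ingredient is the validity of the length-space machinery in the merely continuous category — existence and topological compatibility of $d_{h_\infty}$, the integral formula for the length of absolutely continuous curves, and the inequality $d_{h_\infty}(\gamma(s),\gamma(t))\le L(\gamma|_{[s,t]})$ — which is exactly what \cite{Burtscher2015} supplies; with those foundations in hand, the adaptation of Lemma 3.65 of \cite{BeemEhrlichEasley} is routine.
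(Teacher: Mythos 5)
Your proof is correct, but it takes a genuinely different route from the paper's. The paper traps $\gamma([0,1))$ inside the closed ball of radius $\mathrm{L}(\gamma)$ about $\gamma(0)$, invokes local compactness plus completeness (via Proposition 4.1 of \cite{Burtscher2015} and Proposition 2.5.22 of \cite{BuragoBuragoIvanov2001}, a Hopf--Rinow-type statement) to make that ball compact, extracts a convergent subsequence $\gamma(t_n)\to x_0$, and then upgrades to full convergence by an interleaving contradiction: if $\gamma(t)\not\to x_0$, one finds infinitely many disjoint subarcs each of length at least $\epsilon_0/2$, contradicting $\mathrm{L}(\gamma)<\infty$. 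You instead prove directly that $\gamma(t)$ is Cauchy as $t\to1-$, using monotonicity and boundedness of the arclength function $\ell(t)$ together with absolute continuity of the integral of the $L^1$ speed over a shrinking tail, and then appeal to completeness alone. Your argument is leaner: it never needs compactness of closed balls (hence no local compactness and no Hopf--Rinow), only the additivity of the length functional, the inequality $d(\gamma(s),\gamma(t))\le \mathrm{L}(\gamma|_{[s,t]})$, and the fact that $d_{h_\infty}$ metrizes the manifold topology --- all of which \cite{Burtscher2015} supplies; indeed the tail estimate could be phrased purely in terms of $\ell(t)-\ell(s)\to0$ without the integral representation at all. What the paper's compactness route buys is independence from the integral formula for length and a template that reappears elsewhere in the text (compactness of causal diamonds); what your route buys is greater generality (it is really a statement about complete length spaces) and the elimination of the subsequence-plus-contradiction step. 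Both are valid proofs of the lemma.
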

\begin{proof} Let $\mathrm{L}$ and $\mathrm{d}$ be the length structure and distance induced by $h_\infty$ and consider the closed ball
$$
K\doteq\overline{\mathcal{B}(\gamma(0),\mathrm{L}(\gamma))}=\left\{x\in\Sigma\,{\big|}\quad d(\gamma(0),x)\le\mathrm{L}(\gamma)\right\}.
$$
Since $\mathrm{d}(\gamma(0),\gamma(t))\le\mathrm{L}(\gamma|_{[0,t]})<\mathrm{L}(\gamma)$ for all $t\in[0,1)$, we see that $\gamma([0,1))\subset K$. By Proposition 4.1 in \cite{Burtscher2015}, the topology induced by $\mathrm{d}$ coincides with the manifold topology on $\Sigma$, which is locally compact. Then by Proposition 2.5.22 in \cite{BuragoBuragoIvanov2001} we find that $K$ is compact, and thus there exists a sequence $\{t_n\}_{n=1}^\infty\subset[0,1)$ with $t_n\to 1-$ such that $\gamma(t_n)\to x_0\in\Sigma$. Assume towards a contradiction that $\gamma(t)\not\to x_0$ as $t\to1-$, so that a sequence $\{s_n\}_{n=1}^\infty\subset[0,1)$ exists together with an $\epsilon_0>0$ such that
$\mathrm{d}(\gamma(s_n),x_0)>\epsilon_0$ and $s_n>t_n$ for all $n\in\mathbb{N}$. By passing to a subsequence, if necessary, we can assume that $\mathrm{d}(\gamma(t_n),x_0)<\epsilon_0/2$ and $t_{n+1}>s_n$ for all $n\in\mathbb{N}$. It follows that
$$
\mathrm{L}(\gamma)\ge\sum\limits_{n=1}^\infty\mathrm{d}(\gamma(t_n),\gamma(s_n))\ge\sum\limits_{n=1}^\infty\frac{\epsilon_0}2=\infty,
$$
contrary to the finiteness of $\mathrm{L}(\gamma)$. This contradiction implies the assertion of the lemma. $\Box$
\end{proof}

Define the function $D:\mathbb{R}\times\Sigma\to\mathbb{R}_+$ by
\begin{equation}
D(t,x)\doteq\frac{\beta^2(t,x)}{\min\limits_{\substack{X\in\mathrm{T}_x\Sigma\\h_\infty[x](X,X)=1}}h_t[x](X,X)}>0,\quad\forall (t,x)\in\mathbb{R}\times\Sigma.\label{DDef}
\end{equation}
The quantity in the denominator of the above formula is essentially the lowest eigenvalue of the matrix $h_t[x]$ in the orthonormal frame relative to $h_\infty[x]$. Now $D$ is clearly a continuous function. Recall that $\mathrm{J}^\pm(t,x)\subset\mathbb{R}\times\Sigma$ denotes the causal future/past of the point $(t,x)\in\mathbb{R}\times\Sigma$. Denote for convenience
$$
\Sigma_t\doteq\{t\}\times\Sigma\subset\mathbb{R}\times\Sigma,\quad\forall t\in\mathbb{R}.
$$
\begin{proposition}\label{0xSigmaCauchyProp} The hypersurface $\Sigma_0$ is a Cauchy surface if and only if
\begin{equation}
\sup\limits_{\mathrm{J}^+(t,x)\cap(\mathbb{R}_-\times\Sigma)}D<\infty,\quad\sup\limits_{\mathrm{J}^-(t,x)\cap(\mathbb{R}_+\times\Sigma)}D<\infty,\quad\forall (t,x)\in\mathbb{R}\times\Sigma.\label{0xSigmaCauchyProp:1}
\end{equation}
\end{proposition}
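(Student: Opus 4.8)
The plan is to prove both implications by working with the geometry of causal curves in $(\mathbb{R}\times\Sigma,\beta^2\oplus-h_*)$, using the key observation that a future-directed causal curve $\sigma(s)=(t(s),\gamma(s))$ satisfies $\beta^2\dot t^2\ge h_t[\gamma](\dot\gamma,\dot\gamma)$, hence the $h_\infty$-speed of the spatial projection is controlled: $h_\infty[\gamma](\dot\gamma,\dot\gamma)\le D(t,\gamma)\,\dot t^2$. Thus along a causal curve the $h_\infty$-length of the spatial part is bounded by $\int\sqrt{D}\,|\dot t|\,ds=\int_{t_0}^{t_1}\sqrt{D(t,\gamma(t))}\,dt$ after reparametrizing by $t$ (monotonicity of $t$ along a causal curve). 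This is the quantitative heart of the argument and will be stated as a preliminary computation.

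For the ``if'' direction, assume \eqref{0xSigmaCauchyProp:1}. First, $\Sigma_0$ is clearly a smooth spacelike hypersurface (the metric restricted to $\Sigma_0$ is $h_0$), and it is achronal: any timelike curve is strictly monotone in $t$, so it meets $\Sigma_0$ at most once. It remains to show every inextendible causal curve meets $\Sigma_0$. Take an inextendible future-directed causal curve $\sigma$; by monotonicity $t\circ\sigma$ is increasing, so either it crosses $t=0$ (done) or, say, stays in $\mathbb{R}_-\times\Sigma$ with $t\circ\sigma$ bounded above by some $t_1\le 0$. I would then derive a contradiction: the curve lies in $\mathrm{J}^+$ of its own starting point $(t_\ast,x_\ast)$ intersected with $\mathbb{R}_-\times\Sigma$, so $D\le C<\infty$ along $\sigma$ by hypothesis; reparametrizing by $t\in[t_\ast,t_1)$, the spatial projection $\gamma$ has finite $h_\infty$-length $\le\sqrt{C}\,(t_1-t_\ast)<\infty$ and is absolutely continuous, so by Lemma~\ref{FinLLemma} $\lim_{t\to t_1-}\gamma(t)$ exists; together with $t\to t_1$ this means $\sigma$ extends continuously to its future endpoint, contradicting inextendibility (after checking the limit curve is genuinely extendible as a causal curve, which follows from local considerations near the endpoint). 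The case $t\circ\sigma$ bounded below is symmetric using the second condition. Hence $\Sigma_0$ is a Cauchy surface.

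For the ``only if'' direction, I argue by contraposition: suppose, say, $\sup_{\mathrm{J}^+(t_\ast,x_\ast)\cap(\mathbb{R}_-\times\Sigma)}D=\infty$ for some $(t_\ast,x_\ast)$. The idea is to build an inextendible causal curve that starts in the past, enters $\mathrm{J}^+(t_\ast,x_\ast)$, and ``escapes to infinity'' in $\Sigma$ within a bounded $t$-interval staying in $\mathbb{R}_-\times\Sigma$, so that it never reaches $\Sigma_0$ — showing $\Sigma_0$ is not Cauchy. Concretely, pick points $(t_n,x_n)\in\mathrm{J}^+(t_\ast,x_\ast)\cap(\mathbb{R}_-\times\Sigma)$ with $D(t_n,x_n)\to\infty$; at each such point there is a unit-$h_\infty$ direction $X_n$ with $h_{t_n}[x_n](X_n,X_n)$ very small compared to $\beta^2$, so one can move a large $h_\infty$-distance while advancing $t$ only slightly along a causal direction. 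Splicing these together (and connecting to $(t_\ast,x_\ast)$ by a causal curve, then extending into the past) one produces a future-inextendible causal curve whose $t$-coordinate stays below $0$, contradicting the Cauchy property.

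The main obstacle I anticipate is the ``only if'' construction: one must ensure the spliced curve is genuinely inextendible (not merely of bounded $t$-range) and that the spatial divergence is honest — i.e. that the points $x_n$ can be chosen to leave every compact set, which is exactly where completeness of $(\Sigma,h_\infty)$ and Lemma~\ref{FinLLemma} re-enter: if the curve had a limit point it would, by the length bound, converge, and then $D$ would be bounded near that limit by continuity, contradicting $D(t_n,x_n)\to\infty$. Making the splicing quantitative (choosing how much $t$ to spend on each segment so the total stays finite while the $h_\infty$-length diverges) is a routine but careful estimate. A secondary technical point in the ``if'' direction is verifying that a causal curve with both a convergent spatial limit and convergent $t$-limit is truly extendible as a \emph{causal} curve past that point; this is local and standard, so I would only sketch it.
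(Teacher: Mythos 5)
Your ``if'' direction is essentially the paper's own argument: bound $D$ along the causal curve by restricting to $\mathrm{J}^+$ of its own starting point intersected with $\mathbb{R}_-\times\Sigma$, deduce a finite $h_\infty$-length for the spatial projection, and invoke Lemma~\ref{FinLLemma} to contradict inextendibility. That half is fine.

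The gap is in the ``only if'' direction. You have missed that this is the \emph{easy} implication: if $\Sigma_0$ is a Cauchy surface the spacetime is globally hyperbolic, every point of $\mathbb{R}_-\times\Sigma$ lies in $\mathrm{J}^-(\Sigma_0)$ (by monotonicity of $t$ along causal curves), $\mathrm{J}^+(t,x)$ is closed, and $\mathrm{J}^+(t,x)\cap\mathrm{J}^-(\Sigma_0)$ is compact (Proposition 6.6.6 in \cite{HawkingEllis1973}); hence the region in (\ref{0xSigmaCauchyProp:1}) is compact and the continuous function $D$ is bounded on it. Your contrapositive construction, by contrast, does not go through as described. First, $D(t_n,x_n)\to\infty$ is purely pointwise information: the set where $D$ is large may be arbitrarily thin, so the existence of a degenerate direction $X_n$ \emph{at} $(t_n,x_n)$ does not let you traverse any finite $h_\infty$-distance cheaply --- that would require a lower bound on $D$ along an entire path segment, which you do not have. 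Second, the points $(t_n,x_n)$ all lie in $\mathrm{J}^+(t_*,x_*)$ but need not be causally related to \emph{one another}, so there is in general no causal curve passing through them to splice; they can sit in causally disjoint branches of $\mathrm{J}^+(t_*,x_*)$. A correct constructive route would instead take causal curves $\sigma_n$ from $(t_*,x_*)$ to $(t_n,x_n)$ (noting that continuity of $D$ forces the $x_n$ to leave every compact set of $\Sigma$) and apply a limit curve theorem to extract a future-inextendible causal curve confined to $t\le 0$; but that is a genuinely different and heavier tool than the ``routine but careful estimate'' you defer, and without it the contrapositive argument does not close.
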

\begin{proof} If $\Sigma_0$ is a Cauchy hypersurface then both regions $\mathrm{J}^\pm(t,x)\cap(\mathbb{R}_\mp\times\Sigma)$ are compact (say, Proposition 6.6.6 in \cite{HawkingEllis1973}) for every point $(t,x)$, so that the continuous function $D$ is bounded on them. Assume now that $D$ is bounded on all such regions. We will prove that every $C^1$ inextendible causal curve intersects $\Sigma_0$ exactly once. Let $\gamma:(a,b)\to\Sigma$ be a $C^1$ curve such that $(a,b)\ni t\mapsto(t,\gamma(t))\in\mathbb{R}\times\Sigma$ is causal,
\begin{equation}
[\beta^2\oplus-h_*]\bigl((1,\dot\gamma(t)),(1,\dot\gamma(t))\bigr)=\beta^2(t,\gamma(t))-h_t[\gamma(t)](\dot\gamma(t),\dot\gamma(t))\ge0,\quad\forall t\in(a,b),\label{gammaCausal}
\end{equation}
and let the possibly infinite interval $(a,b)\subseteq\mathbb{R}$ be the maximal domain of $\gamma$. All we need to show is that $0\in(a,b)$. Without loss of generality, assume towards a contradiction that $a<t_0<b\le0$. This means that
\begin{equation}
\nexists\lim\limits_{t\to b-}\gamma(t).\label{gammaLimNoexist}
\end{equation}
From
$$
\left\{(t,\gamma(t))\,{\big|}\quad t\in[t_0,b)\right\}\subset\mathrm{J}^+(t_0,\gamma(t_0))\cap(\mathbb{R}_-\times\Sigma)
$$
and (\ref{0xSigmaCauchyProp:1}) we find that
$$
D(t,\gamma(t))\le D_0<\infty,\quad\forall t\in[t_0,b).
$$
Using this together with (\ref{DDef}) and (\ref{gammaCausal}) we see that
$$
h_\infty[\gamma(t)](\dot\gamma(t),\dot\gamma(t))\le D(t,\gamma(t))\frac{h_t[\gamma(t)](\dot\gamma(t),\dot\gamma(t))}{\beta^2(t,\gamma(t))}\le D_0,\quad\forall t\in[t_0,b).
$$
It follows that
$$
\int\limits_{t_0}^b\sqrt{h_\infty[\gamma(t)](\dot\gamma(t),\dot\gamma(t))}dt\le\sqrt{D_0}(b-t_0).
$$
This means that the restricted curve $\gamma:[t_0,b)\to\Sigma$ has finite length in the complete Riemannian manifold $(\Sigma,h_\infty)$, and therefore by Lemma \ref{FinLLemma} the limit of $\gamma(t)$ as $t\to b-$ exists, in contradiction to (\ref{gammaLimNoexist}). This implies that $b>0$, and a similar argument will show that $a<0$. The proof is complete. $\Box$
\end{proof}

\begin{remark} The choice of $\Sigma_0$ in the above proposition is arbitrary, and the same arguments apply to every $\Sigma_t$.
\end{remark}

\begin{corollary} A spacetime $(M,\mathrm{g})$ is globally hyperbolic if and only if it is isometrically diffeomorphic to $(\mathbb{R}\times\Sigma,\beta^2\oplus-h_*)$ where $\beta$ and $h_*$ satisfy (\ref{0xSigmaCauchyProp:1}).
\end{corollary}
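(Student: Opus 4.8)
The plan is to derive both directions of the corollary directly from the results already in hand, so that the proof amounts to a careful bookkeeping of what ``globally hyperbolic'' entails. First I would recall that by Theorem 1.1 of \cite{BernalSanchez2005}, already quoted in the text, any globally hyperbolic $(M,\mathrm{g})$ is isometrically diffeomorphic to some $(\mathbb{R}\times\Sigma,\beta^2\oplus-h_*)$ with a smooth positive $\beta$ and a smooth family $h_*$ of Riemannian metrics, and moreover the slices $\Sigma_t$ are Cauchy hypersurfaces. Thus the content of the forward implication is: once $(M,\mathrm{g})$ has been put in this normal form, the function $D$ defined in \eqref{DDef} must satisfy \eqref{0xSigmaCauchyProp:1}. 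This is exactly the ``only if'' half of Proposition~\ref{0xSigmaCauchyProp} applied to $\Sigma_0$ (which is a Cauchy hypersurface by the Bernal--S\'anchez splitting), so nothing new is needed: the causal diamonds $\mathrm{J}^\pm(t,x)\cap(\mathbb{R}_\mp\times\Sigma)$ are compact and $D$ is continuous, hence bounded on them.

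For the converse, suppose $(M,\mathrm{g})$ is isometrically diffeomorphic to $(\mathbb{R}\times\Sigma,\beta^2\oplus-h_*)$ with $\beta,h_*$ satisfying \eqref{0xSigmaCauchyProp:1}. One subtlety to flag is that Proposition~\ref{0xSigmaCauchyProp} is stated relative to a fixed choice of complete continuous Riemannian metric $h_\infty$ on $\Sigma$; I would note explicitly that the hypothesis of the corollary should be read as: there \emph{exists} such an $h_\infty$ (guaranteed on any second countable connected manifold by \cite{NomizuOzeki1961}) for which the resulting $D$ satisfies \eqref{0xSigmaCauchyProp:1}. Granting this, Proposition~\ref{0xSigmaCauchyProp} gives that $\Sigma_0$ is a Cauchy hypersurface of $(\mathbb{R}\times\Sigma,\beta^2\oplus-h_*)$, hence this spacetime is globally hyperbolic by definition, and therefore so is the isometrically diffeomorphic $(M,\mathrm{g})$, since global hyperbolicity is an isometric invariant.

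The only remaining point to address is that Proposition~\ref{0xSigmaCauchyProp} presupposes $(\mathbb{R}\times\Sigma,\beta^2\oplus-h_*)$ is a bona fide spacetime in the sense used there, i.e.\ smooth, connected, time-oriented, with $\beta$ positive smooth and $h_t$ a smooth family of Riemannian metrics --- precisely the data appearing in the Bernal--S\'anchez normal form. I would simply remark that these are part of what ``isometrically diffeomorphic to $(\mathbb{R}\times\Sigma,\beta^2\oplus-h_*)$'' means here, with $\mathbb{R}$ supplying the time orientation via $\partial_t$. Thus the corollary is an immediate combination of \cite{BernalSanchez2005} (forward) and Proposition~\ref{0xSigmaCauchyProp} (backward), and I expect no genuine obstacle; the only thing requiring care in the write-up is making the implicit quantifier over $h_\infty$ explicit so that the ``if and only if'' is literally correct rather than merely morally so.
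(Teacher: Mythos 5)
Your proposal is correct and follows exactly the route the paper intends: the corollary is stated without proof as an immediate consequence of the Bernal--S\'anchez splitting theorem (forward direction) and Proposition~\ref{0xSigmaCauchyProp} (backward direction), which is precisely your argument. Your additional remark about making the existential quantifier over the complete reference metric $h_\infty$ explicit is a legitimate and worthwhile clarification of the statement, not a deviation from the paper's approach.
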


The next logical question is the choice of a practically convenient complete metric $h_\infty$ on $\Sigma$ so as to produce a function $D$ by (\ref{DDef}) that satisfies (\ref{0xSigmaCauchyProp:1}). If the global behaviour of $h_t$ as a function of $t$ is uniform in $x$ in a certain sense then the following sufficient condition can be found.

\begin{proposition}\label{COmegaProp} Suppose that a continuous positive function $C\in C(\mathbb{R},\mathbb{R}_+)$ exists such that
$$
\Omega^2(x)\doteq\inf\limits_{t\in\mathbb{R}}\left[\frac{C(t)}{\beta^2(t,x)}\min\limits_{\substack{X\in\mathrm{T}_x\Sigma\\h_0[x](X,X)=1}}h_t[x](X,X)\right]>0,\quad\forall x\in\Sigma.
$$
If the continuous Riemannian manifold $(\Sigma,\Omega^2h_0)$ is complete then every hypersurface $\Sigma_t$ is a Cauchy surface, $t\in\mathbb{R}$.
\end{proposition}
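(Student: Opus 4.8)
The plan is to invoke Proposition~\ref{0xSigmaCauchyProp}, together with the Remark following it, for the particular choice of reference metric $h_\infty\doteq\Omega^2h_0$; this is a complete continuous Riemannian metric on $\Sigma$ by hypothesis, hence an admissible choice, and everything reduces to verifying that the function $D$ built from it via (\ref{DDef}) satisfies the boundedness condition (\ref{0xSigmaCauchyProp:1}) on the relevant causal regions, for every $\Sigma_t$.

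The first step is to re-express $D$ in terms of $h_0$. The denominator of (\ref{DDef}) is the infimum over $X\in\mathrm{T}_x\Sigma\setminus\{0\}$ of the Rayleigh quotient $h_t[x](X,X)/h_\infty[x](X,X)$, and since $h_\infty[x]=\Omega^2(x)\,h_0[x]$ is a positive rescaling the scalar $\Omega^2(x)$ factors out of the infimum. Writing $\mu(t,x)$ for the minimum of $h_t[x](X,X)$ over $\{X\in\mathrm{T}_x\Sigma:h_0[x](X,X)=1\}$ --- that is, the bracketed minimum appearing in the definition of $\Omega^2$ --- one obtains $D(t,x)=\beta^2(t,x)\,\Omega^2(x)/\mu(t,x)$. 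On the other hand, by its definition $\Omega^2(x)=\inf_{t\in\mathbb{R}}\bigl[C(t)\,\mu(t,x)/\beta^2(t,x)\bigr]\le C(t)\,\mu(t,x)/\beta^2(t,x)$ for every fixed $t$, which rearranges at once to $D(t,x)\le C(t)$ on all of $\mathbb{R}\times\Sigma$.

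It remains to bound $C(t)$, and hence $D$, on the sets $\mathrm{J}^+(t,x)\cap(\mathbb{R}_-\times\Sigma)$ and $\mathrm{J}^-(t,x)\cap(\mathbb{R}_+\times\Sigma)$ --- and, for the general surface $\Sigma_{t_1}$, on $\mathrm{J}^+(t,x)\cap((-\infty,t_1)\times\Sigma)$ and $\mathrm{J}^-(t,x)\cap((t_1,\infty)\times\Sigma)$. Here one uses that in the splitting $\beta^2\oplus-h_*$ the coordinate $t$ is a temporal function --- $\nabla t$ is timelike with $g(\nabla t,\nabla t)=\beta^{-2}>0$ --- so $t$ is strictly increasing along future-directed causal curves and therefore $\mathrm{J}^+(t_0,x_0)\subseteq[t_0,\infty)\times\Sigma$ and $\mathrm{J}^-(t_0,x_0)\subseteq(-\infty,t_0]\times\Sigma$. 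Intersecting with $\mathbb{R}_-\times\Sigma$, respectively $\mathbb{R}_+\times\Sigma$, confines the time coordinate to the compact interval $[t_0,0]$, respectively $[0,t_0]$, on which the continuous function $C$ attains a finite maximum; together with $D\le C(t)$ this furnishes the desired uniform bounds. Hence (\ref{0xSigmaCauchyProp:1}) holds for every $\Sigma_t$, and Proposition~\ref{0xSigmaCauchyProp} with the Remark gives that each $\Sigma_t$, $t\in\mathbb{R}$, is a Cauchy surface.

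The computation is short and I do not expect a real obstacle. The two points worth spelling out carefully are the rescaling of the Rayleigh quotient, which is the heart of the matter, and the logical status of $\Omega^2$: an infimum of continuous functions is in general only upper semicontinuous, so the continuity of $\Omega^2h_0$, required for it to qualify as a continuous Riemannian metric (and thus as an admissible $h_\infty$ in Proposition~\ref{0xSigmaCauchyProp}), must be taken from the hypothesis that $(\Sigma,\Omega^2h_0)$ is a continuous Riemannian manifold rather than re-derived.
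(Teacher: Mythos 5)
Your proposal is correct and follows essentially the same route as the paper: rescale the Rayleigh quotient to obtain $D(t,x)\le C(t)$ for the choice $h_\infty=\Omega^2 h_0$, bound $D$ on the time slabs containing the relevant causal regions, and invoke Proposition~\ref{0xSigmaCauchyProp} with its Remark. Your closing observation that the continuity of $\Omega^2$ (an infimum of continuous functions being a priori only upper semicontinuous) must be read off from the hypothesis is a careful touch that the paper's own proof passes over with a bare assertion.
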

\begin{proof} Note that $\Omega:\Sigma\to\mathbb{R}_+$ is a continuous function. Denoting $Y=\Omega(x)^{-1}X$, we observe that
$$
\frac{C(t)}{\beta^2(t,x)}\min\limits_{\substack{X\in\mathrm{T}_x\Sigma\\h_0[x](X,X)=1}}h_t[x](X,X)=\frac{C(t)}{\beta^2(t,x)}\min\limits_{\substack{Y\in\mathrm{T}_x\Sigma\\\Omega^2(x)h_0[x](Y,Y)=1}}\Omega^2(x)h_t[x](Y,Y)
$$
$$
=\frac{C(t)\Omega^2(x)}{D(t,x)}\ge\Omega^2(x),\quad\forall x\in\Sigma,
$$
implying that $D(t,x)\le C(t)$ for all $(t,x)\in\mathbb{R}\times\Sigma$ ($D$ being defined in terms of the complete metric $h_\infty=\Omega^2h_0$). It follows that $D$ is bounded on every cylindrical region $[t_1,t_1]\time\Sigma$, and hence on every region of the form $\mathrm{J}^+(t,x)\cap(-\infty,t_0]$ or $\mathrm{J}^-(t,x)\cap[t_0,+\infty)$. It remains to apply Proposition \ref{0xSigmaCauchyProp}. $\Box$
\end{proof}

\begin{remark} Again, the choice of $h_0$ as a reference metric is arbitrary, and any fixed $h_t$ for $t\in\mathbb{R}$ will serve equally well. Also, no restrictions are put on the long time behaviour of the function $C$ in contrast to Theorem 2.1 of \cite{ChoquetBruhatCotsakis2002}.
\end{remark}

\section*{Homogeneous cosmological spacetimes}

A homogeneous cosmological model assumes a certain amount of additional structure beyond the initial setting of General Relativity. One standard assumption is that a group of isometries acts on the spacetime, so that orbits of this action foliate the spacetime with smooth spacelike homogeneous Cauchy hypersurfaces. The actions of the isometry group on different hypersurfaces are assumed to be isomorphic, and the factorization $M\simeq\mathbb{R}\times\Sigma$ coming from global hyperbolicity is expected to be also equivariant. In our observations, the approach in mathematical physics literature to this wish-list of properties has mostly been "assume and go" (including the author's earlier works), and it is interesting to see if a minimal set of assumptions can be set forth with the rest being rigorously deduced.

Let $(M,\mathrm{g})$ be a connected smooth $d+1$-dimensional Lorentzian manifold with or without boundary, and let $\mathrm{Iso}(M)$ be the Lie group of its isometries. Let $G\subset\mathrm{Iso}(M)$ be a Lie subgroup. We set forth the following assumptions on $M$ and $G$:

\begin{itemize}

\item[1.] $G$ acts {\bf properly} on $M$.

\item[2.] The Killing vector fields given by the $G$-action are {\bf spacelike}, and the {\bf maximal rank} of the distribution they generate is $d$.

\item[3.] All $G$-orbits are connected, i.e., $Gx=G_0x$ for $\forall x\in M$, where $G_0\subset G$ is the identity component.

\end{itemize}

Proper action is an essential restriction that wards against many pathologies \cite{BerndtDiazVanaei2017}, \cite{RudolphSchmidt2013}. It amounts to the map $G\times M\ni(g,x)\mapsto (gx,x)\in M\times M$ being proper, i.e., the preimage of every compact set in $M\times M$ is compact in $G\times M$.

\begin{definition} We call an $M$ as above a ($G$-)homogeneous cosmological spacetime.
\end{definition}

\noindent Note that, aside from the proper action, these properties are easy to verify in practice. We denote by $\operatorname{q}:M\to M/G$ the quotient map $M\ni x\mapsto Gx\in M/G$. The proof of the following proposition relies on the fact that the action of $G$ on $M$ is cohomogeneity one (see \cite{BerndtDiazVanaei2017} for a discussion).

\begin{proposition}\label{GHomCosmSTProp} The following properties follow immediately from the definition of a $G$-homogeneous cosmological spacetime $M$.

\begin{itemize}

\item[1.] The quotient (orbit space) $I\doteq M/G$ is a connected smooth 1-dimensional manifold with or without boundary (thus diffeomorphic to either $\mathbb{R}$, $\mathbb{S}^1$, $[0,+\infty)$ or $[0,1]$), and the interior $\mathring I$ is diffeomorphic to either $\mathbb{R}$ or $\mathbb{S}^1$. Further, $\mathring M=\operatorname{q}^{-1}(\mathring I)$ and $\partial M=\operatorname{q}^{-1}(\partial I)$. Every interior point $Gx\in\mathring I$ is a generic orbit, while every boundary $Gx\in\partial I$ is exceptional or singular.

\item[2.] If $I$ is not diffeomorphic to $\mathbb{S}^1$ then there exists a fixed $d$-dimensional $G$-homogeneous space $\Sigma$, a smooth family $\mathbb{R}\ni t\mapsto h_t\in C^\infty(\mathrm{T}^2\Sigma)$ of $G$-invariant Riemannian metrics on $\Sigma$, and a smooth positive function $\beta\in C^\infty(\mathbb{R},\mathbb{R}_+)$, such that the Lorentzian $G$-manifold $(\mathring M,\mathrm{g})$ is isometrically and equivariantly isomorphic to the product $G$-manifold $(\mathbb{R}\times\Sigma,\beta^2\circ t\oplus-h_*)$.

\end{itemize}
\end{proposition}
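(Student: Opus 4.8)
The plan is to recognise the hypotheses as making the $G$-action a proper action of cohomogeneity one and to deploy its structure theory, the only genuinely Lorentzian ingredient being the exclusion of non-principal orbits from the interior $\mathring M$. Indeed, Assumptions~1--2 say precisely that $G$ acts properly on $M$ with maximal orbit dimension $d=\dim M-1$; since for a proper action the principal orbits form an open dense set realising the maximal orbit dimension, the action is of cohomogeneity one. The structure theory of such actions (see \cite{BerndtDiazVanaei2017} and the references therein) then equips $I=M/G$ with a canonical structure of a smooth $1$-dimensional manifold with boundary for which $\operatorname{q}$ is smooth; since $M$ is connected so is $I$, hence $I$ is diffeomorphic to $\mathbb{R}$, $\mathbb{S}^1$, $[0,+\infty)$ or $[0,1]$, with the interior points corresponding to the principal (generic) orbits and $\partial I$ collecting the non-generic ones.

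For the rest of Part~1 the key is that $\operatorname{q}^{-1}(\partial I)=\partial M$, equivalently that no non-principal orbit lies inside $\mathring M$. Isometries of a manifold with boundary preserve the boundary, so $\partial M$ and $\mathring M$ are $G$-invariant and every connected orbit lies entirely in one of them. Suppose $Gx\subset\mathring M$ were non-principal. By the slice theorem, realised through the normal exponential map of a $G$-invariant auxiliary metric, a $G$-neighbourhood of $Gx$ is equivariantly $G\times_{G_x}\nu_x$, and along the zero section $\mathrm{g}$ is the orthogonal sum of the orbit metric and the flat metric on $\nu_x=(\mathrm{T}_x(Gx))^\perp$, which is Lorentzian because $\mathrm{T}_x(Gx)$ is spacelike and non-degenerate. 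If $Gx$ is singular, so $\dim\nu_x\ge2$, the cohomogeneity-one slice representation $G_x\to\mathrm{O}(\nu_x)$ preserves the quadratic form of $\nu_x$, so every $G_x$-orbit lies in a level set of it; the open set of $\nu_x$-spacelike vectors meets a codimension-one $G_x$-orbit, which is then an open piece of a one-sheeted (Lorentzian) hyperboloid and carries a timelike tangent vector; scaling towards the zero section and using continuity of $\mathrm{g}$, a nearby $G$-orbit in $M$ then also carries a timelike tangent vector, contradicting Assumption~2. If $Gx$ is exceptional, so $\dim\nu_x=1$, some element of the compact group $G_x$ acts as $-1$ on the timelike line $\nu_x$, i.e. reverses a timelike direction at $x$; but an isometry lying in the connected group $G_0$, which by Assumption~3 has the same orbits as $G$, preserves the spacetime's time orientation and cannot reverse a timelike direction. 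Hence all orbits in $\mathring M$ are principal and $\operatorname{q}^{-1}(\partial I)\subseteq\partial M$; conversely a $G$-invariant collar $\partial M\times[0,\epsilon)$ exhibits the orbit space near any orbit inside $\partial M$ as $[0,\epsilon)$, so that orbit lies in $\partial I$. Therefore $\partial M=\operatorname{q}^{-1}(\partial I)$, $\mathring M=\operatorname{q}^{-1}(\mathring I)$, and $\mathring I=\mathring M/G$ is a connected boundaryless $1$-manifold, i.e. $\mathbb{R}$ or $\mathbb{S}^1$, all of whose points are generic orbits.

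For Part~2, assume $I\not\cong\mathbb{S}^1$; then $\mathring I\not\cong\mathbb{S}^1$ as well (a $1$-manifold with boundary whose interior is a circle is a circle), so $\mathring I\cong\mathbb{R}$ and $\mathring M$ is a boundaryless cohomogeneity-one $G$-manifold all of whose orbits are principal, with orbit space $\mathbb{R}$. Choose a $G$-invariant $t\in C^\infty(\mathring M,\mathbb{R})$ inducing the identification $\mathring M/G\cong\mathbb{R}$; it has no critical points (no singular orbits), its gradient $\nabla t$ with respect to $\mathrm{g}$ is a nowhere-zero $G$-invariant vector field orthogonal to the orbits, and since these are spacelike $\nabla t$ is timelike. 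The rescaled field $X=\nabla t/\mathrm{g}(\nabla t,\nabla t)$ satisfies $Xt\equiv1$, so its flow $\phi_s$ is $G$-equivariant and complete --- most safely by first invoking the topological product $\mathring M\cong\mathbb{R}\times G/H$ furnished by cohomogeneity-one theory, which prevents escape. Set $\Sigma:=t^{-1}(0)$, a single $d$-dimensional orbit and hence a $G$-homogeneous space $G/H$; then $(s,y)\mapsto\phi_s(y)$ is a $G$-equivariant diffeomorphism $\mathbb{R}\times\Sigma\to\mathring M$, the $G$-action on the left acting trivially on the $\mathbb{R}$-factor because $G$ fixes $t$. Pulling back $\mathrm{g}$: the mixed $ds\,dy$-terms vanish since $X\perp\{t=\mathrm{const}\}$; the $ds^2$-coefficient $\mathrm{g}(X,X)<0$ is $G$-invariant, hence a function $-\beta^2(s)$ with $\beta\in C^\infty(\mathbb{R},\mathbb{R}_+)$; and the restriction to each slice is $\phi_s^*$ of the Riemannian $G$-invariant metric induced on the orbit $t^{-1}(s)$, i.e. a smooth family $s\mapsto h_s$ of $G$-invariant Riemannian metrics on $\Sigma$. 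This realises $(\mathring M,\mathrm{g})$ as $(\mathbb{R}\times\Sigma,\beta^2\circ t\oplus-h_*)$, isometrically and equivariantly.

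The main obstacle is the exclusion of singular and exceptional orbits from $\mathring M$ in Part~1 --- the step where the Lorentzian signature and the spacelike hypothesis genuinely enter, and where the comparison between the flat slice orbits and the ambient $G$-orbits must be handled with care. A minor technical point is completeness of the flow $\phi_s$ in Part~2, cleanest to secure by first borrowing the topological product decomposition from cohomogeneity-one theory before transporting the metric.
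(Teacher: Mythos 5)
Your overall architecture for Part 1 (proper cohomogeneity-one structure theory, then exclusion of non-principal orbits from $\mathring M$) is the right reading of the proposition, and your exclusion of \emph{singular} orbits is sound: the slice representation of the compact stabilizer preserves the Lorentzian form on $\nu_x$, and a compact subgroup of $\mathrm{O}(1,k)$ fixes a timelike line, so it cannot act with cohomogeneity one on $\nu_x$ when $\dim\nu_x\ge2$; this is genuinely more substantive than the paper's one-line topological justification of $\operatorname{q}(\mathring M)\subset\mathring I$. The gap is in your exclusion of \emph{exceptional} orbits. First, the reflecting element of $G_x$ need not lie in $G_0$: Assumption~3 says $Gx=G_0x$, not $G_x\subset G_0$. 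Second, and fatally, even an element of $G_0$ need not preserve a time orientation, because time-orientability is not among the hypotheses. Concretely, take $\mathbb{R}^2$ with $dt^2-d\theta^2$ and quotient by the isometry $(t,\theta)\mapsto(-t,\theta+\pi)$; the resulting flat open M\"obius band carries an effective isometric action of the connected group $\mathbb{S}^1$ (rotation in $\theta$) which is proper, has spacelike Killing field of rank $1=d$, and has connected orbits, yet the orbit $\{t=0\}$ is exceptional (stabilizer $\mathbb{Z}_2$ acting as $-1$ on the timelike normal line) and sits in the interior of a boundaryless $M$, with orbit space $[0,+\infty)$. So the step you are trying to prove cannot be established from Assumptions 1--3 alone; some additional hypothesis (e.g.\ time-orientability, under which your $G_0$-argument does close) is needed, and the paper's own appeal to openness of $\operatorname{q}$ does not supply one either.

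For Part 2 you take a genuinely different route: you flow along the normalized timelike gradient of a $G$-invariant time function, transverse to the orbits, whereas the paper first trivializes $\mathring M\to\mathbb{R}$ equivariantly as a bundle and then corrects the trivialization by the flow of a field $X_0$ \emph{tangent} to the orbits. The paper's choice is what makes completeness immediate (each integral curve lives in a single homogeneous slice, where invariant fields are complete). Your transverse flow does require a completeness argument, and ``the topological product prevents escape'' is not the right reason --- a product structure alone does not prevent an integral curve from leaving every compact set in finite parameter time. What saves you is homogeneity: by $G$-invariance the local existence time of your flow at parameter $s$ is uniform over the whole orbit $t^{-1}(s)$ and locally uniform in $s$, which forbids finite escape time. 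With that repair (and the sign of $\mathrm{g}(X,X)$ fixed to match the convention $\beta^2\oplus-h_*$), your Part 2 is a correct and slightly more classical alternative to the paper's construction.
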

\begin{proof} 1. By Proposition 6.1.5 and Proposition 6.3.4 in \cite{RudolphSchmidt2013}, the quotient space $I=M/G$ is locally compact, second countable, Hausdorff and connected (for so is $M$), and the quotient map $\operatorname{q}:M\to I$ is open. Moreover, $I$ is the disjoint union of finitely many smooth manifolds $I_\alpha$ corresponding to different orbit types (Proposition 6.6.1 in \cite{RudolphSchmidt2013}). Since $I$ is connected, there is a generic orbit type (principal stratum) $\alpha_*$, and the corresponding manifold $I_{\alpha_*}\subset I$ is open, dense and connected (Remark 6.6.2 in \cite{RudolphSchmidt2013}, Theorem 4.3.2 in \cite{Pflaum2001}). By Proposition 6.6.1 in \cite{RudolphSchmidt2013}, the restricted quotient map $\operatorname{q}_{\alpha_*}:M_{\alpha_*}\to I_{\alpha_*}$ is a submersion of the open dense (since $\operatorname{q}$ is open) $G$-invariant submanifold $M_{\alpha_*}=\operatorname{q}^{-1}(I_{\alpha_*})\subset M$.

By Corollary 6.3.5 in \cite{RudolphSchmidt2013}, each orbit $Gx\subset M$ is a closed embedded submanifold with a transitive $G$-action. The Killing vector fields of the infinitesimal $G$-action are parallel to $Gx$ and exhaust the tangent bundle $\operatorname{T}Gx$. The highest rank of the distribution generated by Killing vector fields corresponds to the highest dimension of a $G$-orbit, so that generic orbits, which by Theorem 4.3.2 in \cite{Pflaum2001} correspond to minimal stabilizer subgroups and thus have maximal dimension, are of dimension $d$. Thus $\dim I_{\alpha_*}=1$ and, being connected, $I_{\alpha_*}\simeq\mathbb{R}$ or $I_{\alpha_*}\simeq\mathbb{S}^1$. Since $I_{\alpha_*}\subset I$ is dense, manifolds $I_\alpha$ for all other orbit types $\alpha$ are connected zero dimensional manifolds, i.e., individual points. If $I_{\alpha_*}\simeq\mathbb{S}^1$ then $I=\mathring I=I_{\alpha_*}$ and $\partial I=\partial M=\emptyset$. If $I_{\alpha_*}\simeq\mathbb{R}$ then $I\simeq\mathbb{R}$, $I\simeq[0,+\infty)$ or $I\simeq[0,1]$, so that $\mathring I=I_{\alpha_*}$. The case $I\simeq\mathbb{S}^1$ with $I_{\alpha_*}\simeq\mathbb{R}$ cannot occur. Indeed, suppose that $I=\mathbb{S}^1$ and $I_{\alpha_*}=\mathbb{S}^1\setminus\{1\}$. Then $I_\epsilon\doteq\{e^{\imath\phi}\,|\,\phi\in(-\epsilon,\epsilon)\}\subset I$ is a connected open submanifold, and the $G$-action on the $G$-invariant open submanifold $M_\epsilon\doteq\operatorname{q}^{-1}(I_\epsilon)\subset M$ is still proper. By Theorem 4.3.2 in \cite{Pflaum2001} the principal stratum $I_{\epsilon,\alpha_*}\subset I_\epsilon$ is open, connected and dense, which implies $I_{\epsilon,\alpha_*}=I_\epsilon$, in contradiction with the fact that the orbit $q^{-1}(\{1\})$ is not generic. Finally, because the map $\operatorname{q}$ is continuous and open, we have $\operatorname{q}(\mathring M)\subset\mathring I$ and $\operatorname{q}^{-1}(\mathring I)\subset\mathring M$, whence $\mathring M=\operatorname{q}^{-1}(\mathring I)$.

2. We denote by $\Sigma$ a representative generic orbit with its $G$-action. Then $\mathring M$ is a principal $G$-bundle over $\mathring I$ (thus trivial) with standard fibre $\Sigma$ (Remark 6.6.2 in \cite{RudolphSchmidt2013}), and thus there is an isomorphism of principal $G$-bundles $\phi:\mathring M\to\mathbb{R}\times\Sigma$. The pushforward $\widetilde{\mathrm{g}}\doteq [d\phi^{-1}]^*\mathrm{g}$ of the Lorentzian metric $\mathrm{g}$ through $\phi$ is a $G$-invariant Lorentzian metric on $\mathbb{R}\times\Sigma$. Since Killing vector fields are spacelike, the metric on every orbit manifold $Gx$ induced from the spacetime metric $\mathrm{g}$ is Riemannian and $G$-invariant. Every $\Sigma_t=\{t\}\times\Sigma$ is the image $\phi(Gx)$ of an orbit $Gx$, thus the metric $\tilde h_t$ induced from $\widetilde{\mathrm{g}}$ on $\Sigma_t$ is Riemannian and $G$-invariant. The function $\widetilde{\mathrm{g}}(dt,dt)$ is thus $G$-invariant and therefore depends on $t$ only. Define $\beta:\mathbb{R}\to\mathbb{R}_+$ by
$$
\beta(t)\doteq\frac1{\sqrt{\widetilde{\mathrm{g}}^{-1}(dt,dt)}},\quad\forall t\in\mathbb{R}.
$$
The vector field
$$
X_0=\beta^2(t)\widetilde{\mathrm{g}}^{-1}(dt,.)-\partial_t
$$
is $G$-invariant and satisfies $dt(X_0)=0$, hence $X_0\in C^\infty(\mathbb{R}\times\mathrm{T}\Sigma)$. Every maximal integral curve $\gamma$ of $X_0$ is thus entirely in $\Sigma_t$ for some $t\in\mathbb{R}$. Then $X_0(t,.)\in C^\infty(\mathrm{T}\Sigma_t)$ is an invariant vector field on the homogeneous space $\Sigma_t$ and is thus complete. This shows that $\gamma$ is complete, and since $\gamma$ was arbitrary, $X_0$ is complete.

We denote by $s\mapsto e^{sX_0}$ the smooth 1-parameter group of diffeomorphisms associated with $X_0$ and define the $G$-equivariant diffeomorphism $\psi:\mathbb{R}\times\Sigma\to\mathbb{R}\times\Sigma$ by setting $\psi(t,p)=e^{-tX_0}(t,p)=(t,e^{-tX_0(t,.)}p)$ for all $(t,p)\in\mathbb{R}\times\Sigma$. Because $\psi$ is $G$-equivariant and acts along $\Sigma$, $h_t\doteq[d\psi(t,.)^{-1}]^*\tilde h_t$ is a Riemannian $G$-invariant metric on $\Sigma_t$ induced from $[d\psi^{-1}]^*\widetilde{\mathrm{g}}$. Then
$$
d\psi^{-1}\partial_t=\partial_t+X_0=\beta^2(t)\widetilde{\mathrm{g}}^{-1}(dt,.)
$$
and
$$
[d\psi^{-1}]^*\widetilde{\mathrm{g}}(\partial_t,\partial_t)=\widetilde{\mathrm{g}}(d\psi^{-1}\partial_t,d\psi^{-1}\partial_t)=\beta^4(t)\widetilde{\mathrm{g}}(\widetilde{\mathrm{g}}^{-1}(dt,.),\widetilde{\mathrm{g}}^{-1}(dt,.))=\beta^2(t),
$$
while
$$
[d\psi^{-1}]^*\widetilde{\mathrm{g}}(\partial_t,X)=\widetilde{\mathrm{g}}(d\psi^{-1}\partial_t,d\psi^{-1}X)=
$$
$$
\beta^2(t)\widetilde{\mathrm{g}}(\widetilde{\mathrm{g}}^{-1}(dt,.),d\psi^{-1}X)=dt(d\psi^{-1}X)=0,\quad\forall X\in C^\infty(\mathbb{R}\times\mathrm{T}\Sigma),
$$
because $\psi(\Sigma_t)=\Sigma_t$ and hence $d\psi^{-1}\mathrm{T}\Sigma_t=\mathrm{T}\Sigma_t$. This shows that $[d\psi^{-1}]^*\widetilde{\mathrm{g}}=\beta^2\circ t\oplus-h_*$, as desired. $\Box$
\end{proof}

Proper action of $G$ on $M$ implies in particular that all stabilizer subgroups $H\subset G$ are compact. Conversely, the product of every connected homogeneous Riemannian manifold $(G,\Sigma)$ that has compact stabilizers with $\mathbb{R}$ gives a $G$-homogeneous cosmological spacetime.

\begin{proposition} Let $\Sigma=G/H$ be a connected $G$-homogeneous space with $H\subset G$ a compact subgroup, and let $\mathbb{R}\ni t\mapsto h_t\in C^\infty(\mathrm{T}^2\Sigma)$ be a smooth 1-parameter family of $G$-invariant Riemannian metrics on $\Sigma$ and $\beta\in C^\infty(\mathbb{R},\mathbb{R}_+)$ a smooth positive function. Then the Lorentzian $G$-manifold $(\mathbb{R}\times\Sigma,\beta^2\circ t\oplus-h_*)$ is a $G$-homogeneous cosmological spacetime.
\end{proposition}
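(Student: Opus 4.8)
The plan is to take $M\doteq\mathbb{R}\times\Sigma$ with the $G$-action $g\cdot(t,p)\doteq(t,gp)$, i.e.\ the trivial action on the $\mathbb{R}$-factor together with the canonical transitive action of $G$ on $\Sigma=G/H$, and to verify one at a time the three defining conditions of a $G$-homogeneous cosmological spacetime. Note first that $M$ is a smooth connected manifold (being a product with the connected manifold $\Sigma$), that $\mathrm{g}\doteq\beta^2\circ t\oplus-h_*$ is a smooth metric of Lorentzian signature $(1,d)$ since $\beta$ is smooth and positive and $t\mapsto h_t$ is a smooth family of Riemannian metrics, and that the $G$-action is isometric: $G$ fixes the coordinate $t$, each $h_t$ is $G$-invariant, and $\beta$ depends on $t$ alone, so $g^*\mathrm{g}=\mathrm{g}$ for every $g\in G$. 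The resulting homomorphism $G\to\mathrm{Iso}(M)$ need not be injective, but its kernel $\bigcap_{a\in G}aHa^{-1}$ is a closed normal subgroup of $G$ contained in the compact group $H$, hence compact; quotienting it out changes neither $\Sigma=G/H$ nor any of the three conditions below, so I would pass to the effective quotient and regard $G$ as a Lie subgroup of $\mathrm{Iso}(M)$.

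Conditions 2 and 3 are then essentially formal. The $G$-orbit of a point $(t,p)$ is $\{t\}\times Gp=\{t\}\times\Sigma$ by transitivity, which is diffeomorphic to the connected manifold $\Sigma$; hence every orbit is connected, which is condition 3 (equivalently $Gx=G_0x$). For condition 2, the Killing (fundamental) vector field of $\xi\in\mathfrak{g}$ at $(t,p)$ --- the velocity at $s=0$ of the curve $s\mapsto(t,\exp(s\xi)p)$ --- equals $\bigl(0,\xi_\Sigma(p)\bigr)\in\{0\}\oplus\mathrm{T}_p\Sigma$, where $\xi_\Sigma$ is the corresponding Killing field on $\Sigma$; as $\xi$ ranges over $\mathfrak{g}$ these vectors span $\mathrm{T}_p(Gp)=\mathrm{T}_p\Sigma$, so the distribution generated by the Killing fields of the action is exactly $\{0\}\oplus\mathrm{T}\Sigma$, of constant rank $d=\dim\Sigma$ (in particular of maximal rank $d$), and the restriction of $\mathrm{g}$ to this distribution is $-h_*$, which is negative definite --- i.e.\ the distribution, and with it every individual Killing field, is spacelike.

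The only step requiring genuine work, and hence the main obstacle, is condition 1, properness of the $G$-action on $M$. I would first reduce it to properness of the $G$-action on $\Sigma$: given a compact $C\subset M\times M$, its images under the two $\Sigma$-valued coordinate projections lie in a single compact $\tilde K\subset\Sigma$ and its images under the two $\mathbb{R}$-valued projections in a compact interval $I$, so the preimage of $C$ under $(g,(t,p))\mapsto\bigl((t,gp),(t,p)\bigr)$ is a closed subset of $\{(g,p):p\in\tilde K,\ gp\in\tilde K\}\times I$; the first factor is compact as soon as the $G$-action on $\Sigma$ is proper, hence the preimage is compact. It then remains to show that $G$ acts properly on $\Sigma=G/H$ whenever $H$ is compact, i.e.\ that $G_K\doteq\{g\in G:gK\cap K\neq\emptyset\}$ is compact for every compact $K\subset\Sigma$. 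For this I would use that $\pi:G\to G/H$ admits local continuous sections to lift $K$ to a compact $\tilde K\subset G$ with $\pi(\tilde K)=K$; then $gK\cap K\neq\emptyset$ forces $\pi(ga)=\pi(b)$ for some $a,b\in\tilde K$, whence $g\in bHa^{-1}\subset\tilde K\,H\,\tilde K^{-1}$, which is compact because $H$ is compact and the group operations of $G$ are continuous. Thus $G_K$ is a closed subset of a compact set, hence compact, and condition 1 follows. (Compactness of $H$ is precisely what makes this argument run, and it is in fact necessary for properness, so the hypothesis cannot be weakened.) Together with the preceding paragraph this verifies all three conditions, proving the proposition.
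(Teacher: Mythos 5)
Your proof is correct and follows essentially the same route as the paper's: both reduce properness of the action on $\mathbb{R}\times\Sigma$ to properness of the $G$-action on $\Sigma=G/H$, and both establish the latter by lifting compact subsets of $G/H$ to compact subsets of $G$ (possible precisely because $H$ is compact) so that the relevant group elements are trapped in a compact set of the form $\tilde K H\tilde K^{-1}$, with the spacelike/rank-$d$ condition on the Killing fields handled in the same one-line way. The only differences are cosmetic: the paper cites a sequential properness criterion and a product-action remark from Rudolph--Schmidt where you argue directly, and you explicitly quotient out the ineffective kernel $\bigcap_{a\in G}aHa^{-1}$, a point the paper passes over by assuming effectiveness.
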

\begin{proof}If $\Sigma$ is connected then so is $\mathbb{R}\times\Sigma$, and (since the $G$-action is effective) $G$ can be considered as a Lie subgroup of $\mathrm{Iso}(\mathbb{R}\times\Sigma,\beta^2\circ t\oplus-h_*)$. First let us convince ourselves that the $G$-action on $\Sigma$ is proper. Because $H$ is compact, the canonical quotient map $\operatorname{q}:G\to\Sigma$ has the property that the preimage $\operatorname{q}^{-1}(K)\subset G$ of every compact set $K\Subset\Sigma$ is compact. Indeed, by Lemma 2.48 in \cite{Folland2015} there exists a compact $E\Subset\operatorname{q}^{-1}(K)$ such that $\operatorname{q}(E)=K$. But then $\operatorname{q}^{-1}(K)=E\cdot H$, and the product of two compact subsets is compact. We will now use Corollary 6.3.3 in \cite{RudolphSchmidt2013}. Let $\{a_n\}_{n=1}^\infty\subset G$ and $\{g_nH\}_{n=1}^\infty\subset G/H=\Sigma$ such that $g_nH\to g_0H\in\Sigma$ and $a_ng_nH\to g_\star H\in\Sigma$. Then exist compacts $K_1,K_2\Subset\Sigma$ such that $g_nH\in K_1$ and $a_ng_nH\in K_2$ for $n>>1$. This means that $g_n\in\operatorname{q}^{-1}(K_1)$ and $a_ng_n\in\operatorname{q}^{-1}(K_2)$, and hence $a_n\in\operatorname{q}^{-1}(K_2)\cdot\operatorname{q}^{-1}(K_1)^{-1}$ for $n>>1$. But $\operatorname{q}^{-1}(K_2)\cdot\operatorname{q}^{-1}(K_1)^{-1}$ is compact, which shows that $\{a_n\}_{n=1}^\infty$ has a convergent subset. This implies that the $G$-action on $\Sigma$ is proper. Now by Remark 6.3.9 in \cite{RudolphSchmidt2013}, the action of $G$ on the product manifold $\mathbb{R}\times\Sigma$ is also proper. The Killing vector fields generated by $G$ on $\mathbb{R}\times\Sigma$ are parallel to $\Sigma$ and thus spacelike, and the rank of the distribution they generate is the dimension of $\Sigma$ which is $\dim(\mathbb{R}\times\Sigma)-1$. $\Box$
\end{proof}

Let us finally turn to the question of global hyperbolicity of a homogeneous cosmological spacetime. The boundary of the spacetime does not play a role in the question of global hyperbolicity, so let us assume that $\partial M=\emptyset$. Then $M=\mathring M$ and $I=\mathring I$, the latter being diffeomorphic to either $\mathbb{R}$ or $\mathbb{S}^1$. In the latter case $M$ is clearly not globally hyperbolic, because it possesses a closed timelike curve. Therefore, without loss of generality, let us assume that $(M,\mathrm{g})=(\mathbb{R}\times\Sigma,\beta^2\circ t\oplus-h_*)$ is our $G$-homogeneous cosmological spacetime. We want to show that the foliation by orbits is also a foliation by Cauchy hypersurfaces, so that the wish-list of desired properties is fulfilled.

\begin{proposition} A $G$-homogeneous cosmological spacetime $(\mathbb{R}\times\Sigma,\beta^2\circ t\oplus-h_*)$ as above is globally hyperbolic, and every hypersurface $\Sigma_t$ is a Cauchy surface.
\end{proposition}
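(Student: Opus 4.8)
The plan is to reduce the statement to Proposition~\ref{COmegaProp} by exploiting the $G$-invariance of all the geometric data. First I would observe that, since $G$ acts transitively on $\Sigma$ by isometries of every $h_t$ as well as of the fixed reference metric $h_0$, the function
$$
\Sigma\ni x\longmapsto\lambda(t,x)\doteq\min_{\substack{X\in\mathrm{T}_x\Sigma\\ h_0[x](X,X)=1}}h_t[x](X,X)
$$
is $G$-invariant, hence constant on the homogeneous space $\Sigma$; denote its value by $\lambda(t)>0$. Likewise $\beta=\beta\circ t$ depends on $t$ alone by construction of the homogeneous cosmological spacetime. Next I would note that $t\mapsto\lambda(t)$ is continuous: evaluating at a fixed basepoint $x_0$, $\lambda(t)$ is the least eigenvalue of the positive-definite matrix representing $h_t[x_0]$ in an $h_0[x_0]$-orthonormal basis, this matrix depends smoothly (in particular continuously) on $t$ since $t\mapsto h_t$ is smooth, and eigenvalues depend continuously on matrix entries. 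Hence $C\doteq\beta^2/\lambda$ is a well-defined element of $C(\mathbb{R},\mathbb{R}_+)$.

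I would then apply Proposition~\ref{COmegaProp} with this choice of $C$ and with $h_0$ as the reference metric. By the preceding identities,
$$
\frac{C(t)}{\beta^2(t)}\min_{\substack{X\in\mathrm{T}_x\Sigma\\ h_0[x](X,X)=1}}h_t[x](X,X)=\frac{C(t)}{\beta^2(t)}\,\lambda(t)=1,\qquad\forall\,(t,x)\in\mathbb{R}\times\Sigma,
$$
so the infimum over $t$ equals $1$ and $\Omega^2\equiv1>0$ on $\Sigma$. The hypothesis of Proposition~\ref{COmegaProp} therefore collapses to the single requirement that the Riemannian manifold $(\Sigma,\Omega^2 h_0)=(\Sigma,h_0)$ be complete.

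The one genuinely nontrivial ingredient is this completeness, and it follows from the classical fact that every connected homogeneous Riemannian manifold is complete: there is an $\epsilon>0$ such that every unit-speed geodesic issuing from any point extends to parameter length at least $\epsilon$ (this holds at one point, hence at all points by transitivity of the isometry action), whence geodesics extend for all parameter values, and one concludes by Hopf--Rinow. Since $(\Sigma,h_0)$ is exactly such a manifold, Proposition~\ref{COmegaProp} applies and shows that every $\Sigma_t$ is a Cauchy hypersurface; in particular $(\mathbb{R}\times\Sigma,\beta^2\circ t\oplus-h_*)$ possesses a Cauchy hypersurface and is globally hyperbolic. I expect this appeal to the completeness of homogeneous Riemannian manifolds, together with the routine continuity of $\lambda$, to be the only step requiring real care. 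As an alternative one may bypass Proposition~\ref{COmegaProp} and argue straight from Proposition~\ref{0xSigmaCauchyProp} with $h_\infty=h_0$: then $D(t,x)=\beta^2(t)/\lambda(t)$ depends on $t$ alone and is continuous, hence bounded on every compact $t$-interval; since the $t$-projection of $\mathrm{J}^+(t,x)\cap(\mathbb{R}_-\times\Sigma)$ (respectively $\mathrm{J}^-(t,x)\cap(\mathbb{R}_+\times\Sigma)$) is contained in such an interval, condition (\ref{0xSigmaCauchyProp:1}) holds and the conclusion follows.
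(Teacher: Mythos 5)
Your proposal is correct and follows essentially the same route as the paper: observe that $G$-invariance of $\beta$ and of the $h_t$ makes the function $D$ of (\ref{DDef}) (with $h_\infty=h_0$) depend on $t$ alone, invoke completeness of the Riemannian homogeneous space $(\Sigma,h_0)$, and apply Proposition~\ref{COmegaProp} with $\Omega\equiv1$ and $C(t)$ equal to that $t$-dependent value. You merely spell out two points the paper leaves implicit (the continuity of the least-eigenvalue function and the Hopf--Rinow argument for completeness of homogeneous Riemannian manifolds), so no further comparison is needed.
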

\begin{proof} As a Riemannian homogeneous space, $(\Sigma,h_0)$ is complete. The continuous function $D:\mathbb{R}\times\Sigma\to\mathbb{R}_+$ from (\ref{DDef}) with $h_\infty=h_0$ and $\beta\circ t$ is clearly $G$-invariant, and thus $D(t,x)=D(t,x_0)$ for a fixed $x_0\in\Sigma$ and all $(t,x)\in\mathbb{R}\times\Sigma$. Now Proposition \ref{COmegaProp} is applicable with $C(t)=D(t,x_0)^{-1}$ and $\Omega(x)=1$. $\Box$
\end{proof}

\section*{Homogeneous cosmological vector bundles}

Having set up the geometry of the spacetime $(M,\mathrm{g})$, the next step is to configure matter fields propagating on it. A state of a classical field is normally given by a section in a vector bundle $\mathcal{T}\to M$, and the subject of the present section is the geometry and topology of vector bundles appearing in homogeneous cosmological models. We saw that global hyperbolicity immediately implies a factorization $M\simeq\mathbb{R}\times\Sigma$, and if the spacetime is homogeneous cosmological then this factorization is also equivariant, meaning that $M$ can be substituted by its factorized form for essentially all purposes without loss of generality. The next natural question that should arise is when and whether the vector bundle $\mathcal{T}\to M$ over such a factorized spacetime also factorizes as a product $\mathcal{T}\simeq\mathbb{R}\times\mathcal{S}$ where $\mathcal{S}\to\Sigma$ is a standard vector bundle, and if a symmetry group $G$ acts on $\mathcal{T}$, whether or not such a factorization is equivariant. We believe that these questions have not been addressed in the mathematical physics literature, and a good reason for that may have been the following circumstance. Once the spacetime has been factorized as a foliation $M\simeq\mathbb{R}\times\Sigma$ by constant time hypersurfaces isomorphic to a standard space $\Sigma$, all covariant vector bundles (e.g., tensor, spinor) are automatically factorized (equivariantly, if homogeneous cosmological) as $\mathcal{T}\simeq\mathbb{R}\times\mathcal{T}|_{\Sigma_0}$. However, modern literature on vector valued fields propagating on curved spacetimes begins with an arbitrary vector bundle over the spacetime \cite{BaerGinouxPfaeffle2007}, \cite{BrunettiDappiaggiFredenhagenYngvason2015}, and a wealth of very convenient technical tools based on global time arguments depend on our ability to extend the factorization to the general case. We will see that the factorization of the vector bundle holds in full generality, so that for essentially all purposes the vector bundle can be assumed to be factorized without loss of generality.

That the vector bundle $\mathcal{T}\to\mathbb{R}\times\Sigma$ is isomorphic to the product $\mathbb{R}\times\mathcal{T}|_{\Sigma_0}$ is not difficult to show using the homotopy classification of vector bundles, e.g., Chapter 7 in \cite{Karoubi1978}. The less trivial question is whether the factorization is equivariant, given a group of spacelike symmetries. Various homotopy classifications of equivariant vector bundles have been known in topology literature for decades under various restrictions (compactness of $\Sigma$ or $G$, see \cite{Bierstone1973},\cite{Lashof1982} etc.). Thankfully, the exhausting answer in amazing generality became available recently in \cite{LueckUribe2014}, and the following proposition is a direct application of methods developed therein.

Denote for every fibre bundle $E\to\mathbb{R}\times\Sigma$ the slice over $\Sigma_0$ by
$$
E_0\doteq E|_{\Sigma_0}.
$$
For a Lie group $G$, a manifold $M$ with a smooth $G$-action, and a vector bundle $\mathcal{T}\overset{\pi}{\longrightarrow}M$, a smooth $G$-action on $\mathcal{T}$ covering the action on $M$ is a smooth $G$-action $\Phi:G\times\mathcal{T}\to\mathcal{T}$ on the total space $\mathcal{T}$ such that
$$
\pi(gy)=g\pi(y),\quad\Phi|_{\mathcal{T}_x}\in\mathrm{Hom}(\mathcal{T}_x,\mathcal{T}_{gx}),\quad\forall y\in\mathcal{T},\quad\forall x\in M,\quad\forall g\in G.
$$
\begin{proposition} Let $\Sigma$ be a connected manifold and $G$ a Lie group acting on $\Sigma$ with compact stability subgroups. Let further $\mathcal{T}\to\mathbb{R}\times\Sigma$ be a real or complex vector bundle with a smooth $G$-action covering the natural action on $\mathbb{R}\times\Sigma$. Then there exists an isomorphism of equivariant vector bundles
$$
\varphi:\mathbb{R}\times\mathcal{T}_0\to\mathcal{T}
$$
such that the restriction of $\varphi$ to $\{0\}\times\mathcal{T}_0$ is the projection onto second component.
\end{proposition}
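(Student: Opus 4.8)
The plan is to obtain the factorization from the \emph{equivariant homotopy invariance of vector bundle pullbacks}, exploiting that $\mathbb{R}\times\Sigma$ equivariantly deformation retracts onto its slice $\Sigma_0$. Write $\iota:\Sigma_0\hookrightarrow\mathbb{R}\times\Sigma$ for the inclusion and $r:\mathbb{R}\times\Sigma\to\Sigma_0$, $r(t,x)=(0,x)$, for the obvious $G$-equivariant retraction, so that $\iota^*\mathcal{T}=\mathcal{T}_0$ and $r^*\mathcal{T}_0\cong\mathbb{R}\times\mathcal{T}_0$ as equivariant bundles. Since $G$ acts trivially on the $\mathbb{R}$-factor, the map
$$
H:(\mathbb{R}\times\Sigma)\times[0,1]\to\mathbb{R}\times\Sigma,\qquad H\bigl((t,x),s\bigr)=(st,x),
$$
is a $G$-homotopy (with $[0,1]$ carrying the trivial action) from $\iota\circ r$ at $s=0$ to $\mathrm{id}_{\mathbb{R}\times\Sigma}$ at $s=1$. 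Granting that $G$-homotopic $G$-maps pull $\mathcal{T}$ back to isomorphic equivariant bundles, one gets
$$
\mathcal{T}=\mathrm{id}^*\mathcal{T}\cong(\iota\circ r)^*\mathcal{T}=r^*\iota^*\mathcal{T}=r^*\mathcal{T}_0\cong\mathbb{R}\times\mathcal{T}_0,
$$
which is the isomorphism $\varphi$ to be constructed.

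The technical heart is thus the equivariant covering homotopy property: the $G$-vector bundle $H^*\mathcal{T}$ over $(\mathbb{R}\times\Sigma)\times[0,1]$ should be $G$-isomorphic to the pullback along the projection of its restriction to $(\mathbb{R}\times\Sigma)\times\{1\}$, so that its two end-restrictions $(\iota\circ r)^*\mathcal{T}$ and $\mathrm{id}^*\mathcal{T}$ are isomorphic. Non-equivariantly this is classical -- one uses a partition of unity on the base, or a connection on $H^*\mathcal{T}$ together with parallel transport in the $[0,1]$-direction. The point is that all the ingredients have equivariant refinements here: because the stability subgroups of the $G$-action on $\Sigma$, hence on $\mathbb{R}\times\Sigma$ and on $(\mathbb{R}\times\Sigma)\times[0,1]$, are compact, the slice theorem furnishes slice-type $G$-covers with $G$-invariant subordinate partitions of unity, and bundle metrics and gluing/transition data can be averaged over the compact stabilizers. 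Assembling these local constructions into the global homotopy classification of equivariant vector bundles is precisely what is carried out in \cite{LueckUribe2014}, and the pullback invariance used above is a direct corollary. I expect the main obstacle to be exactly this bookkeeping of hypotheses: one must check that $(\mathbb{R}\times\Sigma)\times[0,1]$ with the product action, and the bundle $H^*\mathcal{T}$ on it, lie in the class of $G$-spaces and (numerable) $G$-bundles for which the classification of \cite{LueckUribe2014} applies -- which should follow from paracompactness and the slice theorem -- after which the homotopy-theoretic step is immediate.

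For the normalization, note that $H\bigl((0,x),s\bigr)=(0,x)$ for every $s$, i.e. $H$ is stationary over $\Sigma_0$. Carrying out the covering-homotopy construction with a choice (of connection, or of partition of unity) that is trivial along the stationary curves $\{(0,x)\}\times[0,1]$ -- which one can always arrange over the stationary locus -- the resulting isomorphism $r^*\mathcal{T}_0\to\mathcal{T}$ restricts over $\Sigma_0$ to the identity of $\mathcal{T}_0$. Hence $\varphi\vert_{\{0\}\times\mathcal{T}_0}$ is the projection onto the second component, as asserted.
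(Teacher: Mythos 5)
Your proposal follows essentially the same route as the paper: both contract the $\mathbb{R}$-factor via the $G$-homotopy $(t,x)\mapsto(st,x)$ and invoke the equivariant homotopy invariance of bundle pullbacks from \cite{LueckUribe2014}, whose applicability rests precisely on the compactness of the stabilizers, as you note. The only cosmetic differences are that the paper routes the argument through the classifying space (Theorem 11.4 of \cite{LueckUribe2014}) before applying the covering-homotopy Theorem 10.1 to the bundle over $\mathbb{R}\times\Sigma\times[0,1]$ (which is your $H^*\mathcal{T}$), and that it achieves the normalization over $\Sigma_0$ not by a careful choice of connection along the stationary locus but simply by post-composing with $\mathbf{1}\times\phi_\dagger^{-1}$, where $\phi_\dagger$ is the restriction of the obtained isomorphism to $\{0\}\times\mathcal{T}_0$.
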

\begin{proof} Consider the frame bundle $\mathcal{P}\to\mathbb{R}\times\Sigma$ of $\mathcal{T}$, which is a $G$-equivariant principal $\mathrm{GL}$-bundle ($\mathrm{GL}\in\{\mathrm{GL}(n,\mathbb{R}),\mathrm{GL}(n,\mathbb{C})\}$, $n=\dim\mathcal{T}$), in terms of Definition 2.1 in \cite{LueckUribe2014}. It is sufficient to establish an isomorphism of $G$-equivariant principal $\mathrm{GL}$-bundles
$$
\phi:\mathbb{R}\times\mathcal{P}_0\to\mathcal{P},\quad\phi|_{\{0\}\times\mathcal{P}_0}=\mathbf{1}.
$$
If $\mathcal{R}(\mathcal{P})$ is the family of local representations of $\mathcal{P}$ then the corresponding families of all principal bundles below will be subfamilies of $\mathcal{R}(\mathcal{P})$. Since $G$ is a Lie group and $\mathrm{GL}$ a connected Lie group, and because stability subgroups of the $G$-action on $\Sigma$ (and hence on $\mathbb{R}\times\Sigma$) are compact, by Theorem 6.3 in \cite{LueckUribe2014} the Condition (H) of Definition 6.1 in \cite{LueckUribe2014} is satisfied for $\mathcal{R}(\mathcal{P})$. Let now $p:E\to B$ be the universal $G$-equivariant principal $\mathrm{GL}$-bundle for $\mathcal{R}(\mathcal{P})$ from the homotopy classification Theorem 11.4 in \cite{LueckUribe2014}. Then there exists a $G$-map $f:\mathbb{R}\times\Sigma\to B$ such that
$$
\mathcal{T}=f^*E=\bigcup_{(t,x)\in\mathbb{R}\times\Sigma}\{(t,x)\}\times p^{-1}(f(t,x)).
$$
If we introduce $g:\mathbb{R}\times\Sigma\to B$ by $g(t,x)=f(0,x)$ for all $x\in\Sigma$ then
$$
\mathbb{R}\times\mathcal{T}_0=g^*E=\bigcup_{(t,x)\in\mathbb{R}\times\Sigma}\{(t,x)\}\times p^{-1}(f(0,x)).
$$
An explicit $G$-homotopy $F:\mathbb{R}\times\Sigma\times[0,1]\to B$ between $f$ and $g$ can be constructed as follows,
$$
F(t,x;\alpha)\doteq f(\alpha t,x),\quad\forall (t,x)\in\mathbb{R}\times\Sigma,\quad\forall\alpha\in[0,1],
$$
$$
F(t,x;0)=g(t,x),\quad F(t,x;1)=f(t,x).
$$
Consider the $G$-equivariant principal $\mathrm{GL}$-bundle
$$
F^*E=\bigcup_{(t,x;\alpha)\in\mathbb{R}\times\Sigma\times[0,1]}\{(t,x;\alpha)\}\times p^{-1}(f(\alpha t,x)).
$$
Observe that
$$
F^*E|_{\mathbb{R}\times\Sigma\times\{0\}}=\mathbb{R}\times\mathcal{T}_0,\quad F^*E|_{\mathbb{R}\times\Sigma\times\{1\}}=\mathcal{T}.
$$
By Theorem 10.1 of \cite{LueckUribe2014} there exist isomoprhisms
$$
\phi_0:\mathbb{R}\times\mathcal{T}_0\times[0,1]\to F^*E,\quad\phi_0|_{\mathbb{R}\times\mathcal{T}_0\times\{0\}}=\mathbf{1},
$$
$$
\phi_1:\mathcal{T}\times[0,1]\to F^*E,\quad\phi_1|_{\mathbb{R}\times\mathcal{T}_0\times\{1\}}=\mathbf{1},
$$
and they yield an isomorphism
$$
\phi_1^{-1}\circ\phi_0:\mathbb{R}\times\mathcal{T}_0\times[0,1]\to\mathcal{T}\times[0,1],
$$
which by restriction gives the isomorphism
$$
\phi_*\doteq[\phi_1^{-1}\circ\phi_0]|_{\mathbb{R}\times\mathcal{T}_0\times\{0\}}:\mathbb{R}\times\mathcal{T}_0\to\mathcal{T}.
$$
A further restriction provides an isomorphism
$$
\phi_\dagger\doteq\phi_*|_{\{0\}\times\mathcal{T}_0}:\mathcal{T}_0\to\mathcal{T}_0,
$$
and finally our desired isomorphism is
$$
\phi\doteq\phi_*\circ(\mathbf{1}\times\phi_\dagger^{-1}):\mathbb{R}\times\mathcal{T}_0\to\mathcal{T}.
$$
The proof is complete. $\Box$
\end{proof}

\begin{remark} The choice of the hypersurface $\Sigma_0$ is arbitrary, and the statement remains true for every $\Sigma_t$, $t\in\mathbb{R}$.
\end{remark}

\subsection*{Acknowledgements}

The last part of this work devoted to vector bundles would be impossible without the very enlightening discussions with Nikolai Saveliev and Fedor Manin, whom we express our deep gratitude. We further thank the anonymous referee for their suggestions and for pointing out an inconsistency in part 2. of Proposition \ref{GHomCosmSTProp} in the original version of the manuscript, which lead to a substantial improvement of the text.

\end{document}